\newtheorem{Rem}{Remark}
\newtheorem{proposition}{Proposition}
\begin{document}
\begin{frontmatter}
\title{Statistical Information Fusion for Multiple-View Sensor Data in Multi-Object Tracking}
\author[China]{Xiaoying~Wang}
\author[Australia]{Reza~Hoseinnezhad\corref{mycorrespondingauthor}}
\cortext[mycorrespondingauthor]{Corresponding author}
\ead{rezah@rmit.edu.au}
\author[Australia]{Amirali~K.~Gostar}
\author[Australia]{Tharindu~Rathnayake}
\author[China]{Benlian Xu}
\author[Australia]{Alireza Bab-Hadiashar}
\address[China]{Changshu Institute of Technology, Changshu, Soochow, P. R. of China.}  
\address[Australia]{School of Engineering, RMIT University, Victoria, Australia.}  

\begin{abstract}
This paper presents a novel statistical information fusion method to integrate multiple-view sensor data in multi-object tracking applications. The proposed method overcomes the drawbacks of the commonly used Generalized Covariance Intersection method, which considers constant weights allocated for sensors. Our method is based on enhancing the Generalized Covariance Intersection with adaptive weights that are automatically tuned based on the amount of information carried by the measurements from each sensor. To quantify information content, Cauchy-Schwarz divergence is used. Another distinguished characteristic of our method lies in usage of the Labeled Multi-Bernoulli filter for multi-object tracking, in which the weight of each sensor can be separately adapted for each Bernoulli component of the filter. The results of numerical experiments show that our proposed method can successfully integrate information provided by multiple sensors with different fields of view. In such scenarios, our method significantly outperforms the state of art in terms of inclusion of all existing objects and tracking accuracy.
\end{abstract}

\begin{keyword}
Multi-sensor fusion, random finite sets, labeled multi-Bernoulli filter, {C}auchy-{S}chwarz divergence.
\end{keyword}
\end{frontmatter}
\section{Introduction}
\label{sec:intro}
Statistical sensor fusion methods combine the information received from multiple sensors to propagate statistical density and estimate the state(s) of object(s) with enhanced accuracy compared to  using a single sensor~\cite{Fusion_book_2009}. The emergence of new sensors, advanced processing techniques, and improved processing hardware has made real-time multi-sensor fusion increasingly viable and rapidly evolving. By means of multi-sensor information fusion, multi-object tracking systems can not only enlarge spatial/temporal surveillance coverage, but also improve system reliability and robustness~\cite{Fusion_book_2004,m_delta_glmb_SPL_2016}. 

Popular statistical multi-object tracking techniques include Multiple-Hypotheses Tracking (MHT)~\cite{MHT_TAC_1979}, Joint Probabilistic Data Association (JPDA)~\cite{MHT_JPDA_book_1999}, and Random Finite Set (RFS) filters~\cite{book_Mahler_2014}. Recent studies on RFS theory have led to various filters with different implementations (based on simplifying but reasonable assumptions on the multi-object distribution) such as probability hypothesis density (PHD) filter and its cardinalized version (CPHD)~\cite{book_Mahler_2014},  the multi-Bernoulli filter (MB) and its cardinality-balanced version (CB-MeMBer)~\cite{vo_cbmember_TSP_2009}, and the newly derived  $\delta$-GLMB(Generalized Labeled Multi-Bernoulli)~\cite{vo_glmb_TSP_2014,vo_glmb_TSP_2015} and its special case, the Labeled Multi-Bernoulli (LMB) filters~\cite{vo_glmb_TSP_2013,vo_lmb_TSP_2014}. A robust version of multi-Bernoulli filter that can handle unknown clutter intensity and detection profile has also been introduced in~\cite{robust_MB_JSTSP}. The most recent family of RFS filters are based on using \textit{labeled random finite set} densities which have been shown to admit conjugacy with the general multiple point measurement set likelihood and to be closed under the Chapman-Kolmogorov equation~\cite{vo_glmb_TSP_2013,vo_lmb_TSP_2014}. 

Various implementations of RFS filters based on Sequential Monte Carlo (SMC) and Gaussian-mixture approximations have been presented. These methods have been utilized to solve different tracking problems in various applications such as track-before-detect visual tracking applications~\cite{vo_tbd_TSP_2010,our_pattern_rec_2012,our_TSP_visual_tracking}, and sensor management in target tracking applications~\cite{amir_OSPA_based_TAES_2015,amir_SPL_2013,amir_SP_2016,hoang_automatica_2014}. 

In multi-sensor applications, a common approach is to use the above filters in each sensor node and fuse the posterior densities that computed locally in every node with each other. The fusion operation can be executed in a distributed or central manner. In this context, multi-sensor fusion refers to combining several multi-object posteriors that depending on the type of the filter used in each sensor node, can have a different mathematical form. To combine the local posteriors, the most common method proposed in the RFS filtering literature is using the Generalized Covariance Intersection (GCI) rule. Examples include the fusion of Poisson multi-object posteriors of multiple local PHD filters~\cite{battistelli_consensus_PHD_SPIE_2015}, i.d.d. clusters densities of several local CPHD filters~\cite{battistelli_consensus_CPHD_2013}, multi-Bernoulli densities of local multi-Bernoulli filters~\cite{bailu_2016}, and LMB or $\delta$-GLMB filters~\cite{consensus_labelled_RFS_2016}. 

The multi-sensor fusion rules that have been formulated based on the GCI-rule in~\cite{battistelli_consensus_PHD_SPIE_2015,battistelli_consensus_CPHD_2013,bailu_2016,consensus_labelled_RFS_2016} all lead to \textit{consensus fusion} (also called \textit{competitive fusion}) of multiple densities. The fused density is formed with more emphasis on objects that are included in \textit{all} local posteriors. In many applications, where different sensors have different fields of view, some objects may not be detected by \textit{all} sensors. In a consensus/competitive sensor fusion scheme, e.g. using the GCI-rule, such objects may not be represented well in the fused multi-object density, and may be lost from the fused multi-object state estimate (i.e. from the tracking results).

Figure~\ref{fig:com_case} shows an example where multiple targets are detected by a multi-transceiver suite. Due to the distance between, and limited field of view of the sensors, each target is detected only by some  sensors. In this scenario, the multi-sensor fusion solution should be capable of combining \textit{complementary} information provided by various sensors. Examples of complementary fusion solutions include the usage of linear-like complementary filters for attitude estimation~\cite{Complementary_TCST_2016}, using an extended Kalman filter for complementary fusion of multi-sensor data in mobile robotics~\cite{Complementary_TM_2015}, and using an unscented Kalman filter for complementary fusion of multiple Poisson densities from local PHD filters in robotic applications~\cite{Complementary_ITSC_2016}.

\begin{figure}
	\centering
	\includegraphics[width=3 in]{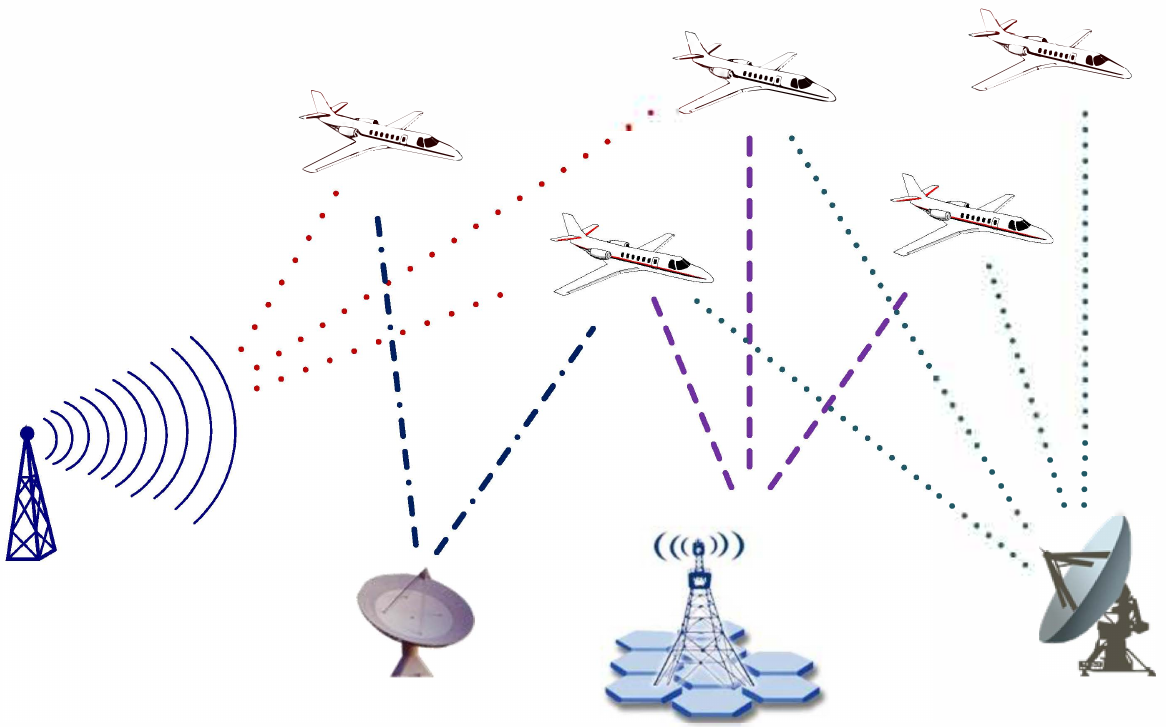}
	\caption{Due to the far distance between the sensors (transceivers) and their limited field of view, no target is consistently detected by \textit{all} sensors. There is no unanimous agreement on any target between all sensors, and a consensus-based fusion algorithm may fail to include all target state estimates.\label{fig:com_case}}
\end{figure}

This paper presents a novel strategy for combining random set posteriors from sensor nodes with different and limited fields of view. The proposed sensor fusion method is fundamentally different from the GCI-rule as used in~\cite{battistelli_consensus_PHD_SPIE_2015,battistelli_consensus_CPHD_2013,bailu_2016,consensus_labelled_RFS_2016}, in the sense that it performs not only consensus fusion (thus, enhanced accuracy when sensors are in agreement on the same object), but also complementary fusion (thus, extended coverage when different sensors cover different regions). 

Our method is particularly useful for fusion of LMB posteriors, with which the GCI-rule leads to separately combining the densities of Bernoulli components (possible objects) with the same label. We give adaptive weights to each Bernoulli component label, so that \textit{the more informative} local posteriors for that label influence the fused density more. The information content of each local posterior can be quantified using  information theoretic divergence. 

In a Bayes filtering context, divergence functionals are commonly used to quantify the expected information gain from prior to posterior densities. The commonly used divergence functions in the stochastic filtering literature include the Shannon entropy~\cite{Shannon_entropy_1991}, the Kullback-Leibler (KL) divergence~\cite{battistelli_consensus_PHD_SPIE_2015,KL_Battistelli_automatica_2014,KL_Battistelli_fusion_2015}, and R\'enyi divergence~\cite{ristic_automatica_2010,ristic_reward_PHD_TAES_2011,amir_SP_2016} which has been the dominant choice of divergence in the random set filtering literature. These divergences have significant computational cost and can only be computed analytically for simple cases. Recently, Cauchy-Schwarz divergence was shown to admit a simple closed-form expression for Poisson multi-object densities~\cite{vo_CS_Poisson}. This was followed by an increasing uptake of Cauchy-Schwarz divergence to solve multi-object tracking problems using labeled random set filters~\cite{vo_CS_fusion2015,vo_void_cs_2015,Amir_CauchySchwarz_2016,meng_CS_2016}. In this work, we use Cauchy-Schwarz divergence to quantify the information content of each local posterior in relation to each possibly existing object (Bernoulli component of the fused LMB posterior).

The major contributions of this paper include:
\begin{enumerate}
\item A complete framework is proposed for statistical multi-sensor fusion in multi-object systems, formulated for LMB filters running locally in each sensor node of the system. The solution can handle both consensus (competitive) fusion of sensors that detect the same object, and complementary fusion of sensors with limited field of view that cover different parts of the object state space, or a mix of competitive and complementary fusion in cases where the sensors' fields of view overlap and objects can be detected by more than one sensor.
\item Formulation of Cauchy Schwarz divergence from prior to posterior is presented for each component of an LMB density, in addition to its incorporation into the fusion formula.
\item Detailed Sequential Monte Carlo (SMC) implementation of the entire framework using LMB filters is presented via a step-by-step algorithm.
\end{enumerate}

The rest of this paper is organized as follows. Section~\ref{sec:back} briefly reviews the necessary background material on labeled multi-Bernoulli filter and Cauchy-Schwarz divergence. The proposed multi-sensor fusion method and its full implementation are then presented in section~\ref{sec:approach}. Numerical experiments are given in section~\ref{sec:sim_res}. They demonstrate that in scenarios involving sensors with unlimited fields of view (hence all objects being detectable, and consensus fusion being suitable), our method performs very similar to the GCI-rule. However, with sensors having limited (and overlapping) fields of view, our method is capable of tracking all the objects while the state of art method (GCI-rule method) leads to targets being lost.  Section~\ref{sec:conc} concludes the paper.
\section{Background}
\label{sec:back}
This section briefly presents background material on Bayesian multi-object tracking, labeled RFS and Cauchy-Schwarz divergence which are necessary for developing our proposed method.
\subsection{Bayesian Multi-Object Tracking}
For notational purposes, we use lower-case letters (e.g. $x,\mathbf{x}$) to denote single-object states and upper-case letters (e.g. $X,\mathbf{X}$) to denote multi-object states. In order to distinguish labeled states and distributions from unlabeled ones, we adopt bold letters to represent labeled entities and variables (e.g. $\mathbf{X,x,\bm{\pi}}$). Furthermore, blackboard bold letters represent spaces for variables (e.g. $\mathbb{X,Z,L}$).

Assume that at time $k$, the labeled multi-object state is denoted by $\textbf{X}_k\subset\mathbb{X}$ and the multi-object observation is denoted by $Z_k\subset\mathbb{Z}$ where $\mathbb{X}$ and $\mathbb{Z}$ denote the space of single-object states and single-object measurements, respectively. Both $\textbf{X}_k$ and $Z_k$ are modeled as random finite sets. In a Bayes multi-object filtering scheme, the multi-object random set distribution is recursively predicted and updated.

Let us denote the labeled multi-object prior density (at time $k-1$) by $\bm{\pi}_{k-1}(\cdot|Z_{1:k-1})$, where $Z_{1:k-1}$ is the collection of finite measurements up to time $k-1$. The multi-object prediction at time $k$ is given by Chapman-Kolmogorov equation,
\begin{equation}
\bm{\pi}_{k|k-1}(\bm{X}_k|Z_{1:k-1})=\hspace{-2mm}\int f_{k|k-1}(\bm{X}_k|\bm{X})\bm{\pi}_{k-1}(\bm{X}|Z_{1:k-1})\delta \bm{X},
\end{equation}
where $ f_{k|k-1}(\bm{X}_k|\bm{X})$ is the multi-object transition kernel from time $k-1$ to time $k$ and the integral is the labeled set integral defined in~\cite{vo_lmb_TSP_2014,vo_CS_fusion2015}. 

The update step of the Bayes multi-object filter at time $k$ is based on Bayes' rule and returns the following multi-object posterior:
\begin{equation}
\bm{\pi}_{k}(\bm{X}_k|Z_{1:k})=\frac{g_k(Z_k|\bm{X}_k)\bm{\pi}_{k|k-1}(\bm{X}_k|Z_{1:k-1)}}
{\int g_k(Z_k|\bm{X})\bm{\pi}_{k|k-1}(\bm{X}|Z_{1:k-1)}\delta \bm{X}},
\end{equation}
where $Z_k$ is the set of observations received at time $k$, and it is comprised of some measurements each associated with an object (with some objects possibly missed), and some false alarms or clutter. Both the number of object-related measurements and the number of false alarms randomly vary with time. Hence, $Z_k$ is a random set with its stochastic variations characterized by a multi-object likelihood function $g_k(Z_k|\bm{X}_k)$.

\subsection{Labeled Multi-Bernoulli Filter}

To implement the general Bayes' filtering steps presented in previous section, for the sake of computational tractability, a specific family of distribution is assumed for the multi-object state, and the filter is implemented to predict and update the parameters of that specific multi-object distribution. Labeled multi-Bernoulli (LMB) is an example of the recently introduced labeled random set distributions. The Bayes' filter that operates based on assuming LMB prior and posterior is called an LMB filter. 

The LMB distribution is completely described by its components $\bm{\pi} = \{(r^{(\ell)},p^{(\ell)}(\cdot))\}_{\ell\in\mathbb{L}}$ where $r^{(\ell)}$ is the \textit{probability of existence} of a track with label $\ell \in \mathbb{L}$, and $p^{(\ell)}(x)$ is the probability density of the object's state $x \in \mathbb{X}$ conditional on its existence. The LMB RFS density is given by
\begin{equation}
\label{eq:LMB_Distribution}
\bm{\pi}(\mathbf{X})=\Delta (\mathbf{X})w (\mathcal{L}(\mathbf{X}))\left[ p\right] ^{\mathbf{X}},  
\end{equation}
where $\mathcal{L}(\bm{X})$ is the set of all labels extracted from labeled states in $\bm{X}$, and 
\begin{equation}
\Delta(\mathbf{X}) \triangleq
\begin{cases}
1 & \mathrm{if}\ |\mathbf{X}| = |\mathcal{L}(\mathbf{X})| \\
0 & \mathrm{otherwise,}
\end{cases}
\end{equation}
 in which $|\cdot|$ means ``the cardinality of", and 
 \begin{equation}
 [p]^{\bm{X}} \triangleq \prod_{(x,\ell)\in\bm{X}} p^{(\ell)}(x),
 \end{equation}
 and
\begin{eqnarray}
w(L) &=& \prod\limits_{i \in \mathbb{L}}\left( 1-r^{(i)}\right) \prod\limits_{\ell \in L} \frac{1_{\mathbb{L}}(\ell)r^{(\ell)}} {(1-r^{(\ell)})}
\end{eqnarray}
is the probability of joint existence of all objects with labels $\ell \in L$ and non-existence of all other labels~\cite{vo_lmb_TSP_2014}. 

In the LMB filter, suppose that the prior is an LMB with parameters $ \{(r^{(\ell)},p^{(\ell)}(\cdot))\}_{\ell\in\mathbb{L}}$. In the prediction step, the LMB prior is turned into the following new LMB density with evolved particles and probabilities of existence including the LMB birth components:
\begin{equation}
{\bm{\pi}}_{+} = \left\{\left(r_{+,S}^{(\ell)},p_{+,S}^{(\ell)}\right)\right\}_{\ell\in\mathbb{L}} \cup 
\left\{\left(r_{B}^{(\ell)},p_{B}^{(\ell)}\right)\right\}_{\ell\in\mathbb{B}}
\label{eq:prediction}
\end{equation} 
where 
\begin{eqnarray}
r_{+,S}^{(\ell)} & = & \eta_S(\ell)\, r^{(\ell)} \\
p_{+,S}^{(\ell)} & = & \langle p_S(\cdot,\ell) f(x|\cdot,\ell),p^{(\ell)}(\cdot) \rangle / \eta_S(\ell)
\end{eqnarray}
and 
\begin{equation}
\eta_S(\ell) = \langle p_S(\cdot,\ell),p^{(\ell)}(\cdot) \rangle.
\label{eq:prediction2}
\end{equation}
We denote the predicted LMB parameters by $\{(r_+^{(\ell)},p_+^{(\ell)}(\cdot))\}_{\ell\in\mathbb{L}_+}$ where $\mathbb{L}_+ = \mathbb{L}\cup\mathbb{B}$. 

Assume that at the sensor node $i$, a measurement set denoted by $Z_i$ is acquired, and denote the LMB posterior (locally updated at sensor node $i$) as  
$$
{\bm{\pi}}_{i}(\cdot|Z_i) = \left\{
\left(
r_{i}^{(\ell)},p_{i}^{(\ell)}(\cdot)
\right)
\right\}_{\ell\in\mathbb{L}_+}.
$$
The parameters of each local LMB posterior are given by~\cite{vo_lmb_TSP_2014}:
\begin{equation}
\begin{array}{rcl}
r_{i}^{(\ell)} & = & \sum\limits_{(I_+,\theta)\in\mathcal{F}(\mathbb{L}_+)\times{\Theta}_{I_+}}
w^{(I_+,\theta)}(Z)\ 1_{I_+}(\ell)\\
p_{i}^{(\ell)}(x) & = & \frac{1}{r^{(\ell)}} \sum\limits_{\scriptsize{(I_+,\theta)\in\mathcal{F}(\mathbb{L}_+)\times{\Theta}_{I_+}}}
\hspace{-3mm}w^{(I_+,\theta)}(Z)\ 1_{I_+}(\ell) p^{(\theta)}(x,\ell)
\end{array}
\label{eq:update}
\end{equation}
where
\begin{equation}
\begin{array}{rcl}
w^{(I_+,\theta)}(Z) & \propto & w_{+}(I_+) [\eta_Z^{(\theta)}]^{I_+} \\
p^{(\theta)}(x,\ell) & = & \frac{p_+^{(\ell)}(x) \psi_Z(x,\ell;\theta)}{\eta_Z^{(\theta)}(\ell)} \\
\eta_Z^{(\theta)}(\ell) & = & \langle  p_+^{(\ell)}(x),\psi_Z(x,\ell;\theta) \rangle \\
\psi_Z(x,\ell;\theta) & = & 
\left\{
\begin{array}{lcr}
\frac{p_D(x,\ell) g(z_{\theta(\ell)}|x,\ell)}{\kappa(z_{\theta(\ell)})}, & \mathrm{if} & \theta(\ell)>0 \\
1-p_D(x,\ell), & \mathrm{if} & \theta(\ell) = 0
\end{array}
\right.
\end{array}
\end{equation}
and $\Theta_{I_+}$ is the space of mappings $\theta: I_+ \rightarrow \{0,1,\ldots,|Z|\}$ such that $\theta(i) = \theta(i')>0$ implies $i = i'$,
and the weight term, $w_{+}(I_+)$, is given by:
\begin{equation}
w_{+}(I_+) = \prod_{i\in \mathbb{L}_+} \left(1-r_+^{(i)}\right) 
\prod_{\ell\in I_+}
\frac{1_{\mathbb{L}_+}(\ell) r_+^{(\ell)}}{1-r_+^{(\ell)}}.
\label{eq:update2}
\end{equation}

When the filter is implemented using SMC method, the density function of each component with label $(\ell)$ is approximated by $J^{(\ell)}$ number of particles and weights 
\begin{equation}
p^{(\ell)}(x) \approx \sum_{j=1}^{J^{(\ell)}} w_{j}^{(\ell)}
\delta(x-x_{j}^{(\ell)})
\end{equation}
where $\delta(\cdot)$ is the Dirac delta function. This makes it easy to compute the inner products and integrals involved in prediction and update equations (integrals are replaced with sums over particles). 

\subsection{ {C}auchy-{S}chwarz divergence}
The Cauchy-Schwarz divergence is based on the Cauchy-Schwarz inequality for inner products. The Cauchy-Schwarz divergence between the probability densities $f$ and $g$ of two random vectors with respect to the reference measure $\mu$ is given by:
\begin{equation}
D_{CS}(f,g)= - \ln\frac{\langle f, g \rangle_\mu}{\left\| f \right\|_\mu\left\| g \right\|_\mu}
\end{equation}
where $\langle f , g \rangle_\mu  \triangleq \int_{\mathbb{X}} f(x) g(x)\mu dx$
denotes standard inner product of two densities, and $\left\| f \right\|_\mu\triangleq\sqrt{\langle f, f \rangle_\mu}$. Note that $D_{CS}(f,g)$ is symmetric and positive unless $f = g$, in which case $D_{CS}(f,g)=0$.  

Similar to random vector densities, the Cauchy-Schwarz divergence between two densities $\bm{\pi}_1$ and $\bm{\pi}_2$ is given by:
\begin{equation}
D_{CS}(\bm{\pi}_1,\bm{\pi}_2)= - \ln\frac{\langle \bm{\pi}_1, \bm{\pi}_2 \rangle_\mu}{\left\| \bm{\pi}_1 \right\|_\mu\left\| \bm{\pi}_2 \right\|_\mu}
\end{equation}
where the norms are computed via set integrals~\cite{vo_CS_Poisson}. The computational cost of the above divergence term is usually too expensive for real-time applications. However, as Hoang~\textit{et al.}~\cite{vo_CS_Poisson} showed, in the special case of two Poisson RFSs, the Cauchy-Schwarz divergence equals half the squared distance between their intensity functions. Indeed, denoting the two Poisson RFS densities by $\pi_1$ and $\pi_2$ and their intensity functions by  $v_1(\cdot)$ and $v_2(\cdot)$, the Cauchy-Schwarz divergence between them is given by: 
\begin{equation}
D_{CS}(\pi_1,\pi_2)= \frac{K}{2}\|v_1-v_2\|^2
\label{eq:CSD_PHD}
\end{equation}
where $\|v_1-v_2\|^2 = \int |v_1(x)-v_2(x)|^2 dx,$ and $K$ is the unit of hyper-volume.
\section{Multiple-View Sensor Fusion with LMB Filter}
\label{sec:approach}
Consider an LMB filter that operates on the measurements received from $n_s$ sensors for tracking an unknown number of objects. The predicted LMB is independent from sensor measurements and can be directly computed by~\eqref{eq:prediction}--\eqref{eq:prediction2}. In each sensor node $i$, a local update returns a posterior from the prior (predicted) LMB, denoted by $\{(r_{i}^{(\ell)},p_{i}^{(\ell)}(\cdot))\}_{\ell\in\mathbb{L}_+}$. The sensor fusion problem is focused on how to combine all these local posteriors into a fused LMB posterior that conveys all the information in a concise yet comprehensive manner. 

The most common method for sensor fusion with random set filters is the Generalized Covariance Intersection (GCI) method. Let us assume that local posteriors $\bm{\pi}_1, \ldots, \bm{\pi}_{n_s}$ are fused into the multi-object density $\bm{\pi}$. In GCI method, the following weighted average of Kullback-Leibler distances from local posteriors to the fused posterior should be minimized~\cite{consensus_labelled_RFS_2016}:
\begin{equation}
\bm{\pi}_{\mathrm{fused}} = \arg\min_{\bm{\pi}} \sum\limits_{i=1}^{n_s} \omega_i D_{KL}(\bm{\pi}_i,\bm{\pi}). 
\end{equation}
where $\sum_{i=1}^{n_s} \omega_i = 1$. The weights $\{\omega_i\}_{i=1}^{n_s}$ are constant and quantify our confidence on validity of the measurements acquired from each sensor. 

When all the local and fused posteriors are LMB, with the above mentioned parametrization, the following GCI-rule is derived~\cite{consensus_labelled_RFS_2016,meng_CS_2016}:
\begin{flalign}
\tilde{r}^{(\ell)} = & \frac{\int \prod_{i=1}^{n_s} \left(r_{i}^{(\ell)} p_{i}^{(\ell)}(x)\right)^{\omega_i}dx}{
	\prod_{i=1}^{n_s}\left(1-r_{i}^{(\ell)}\right)^{\omega_i}
	+\int \prod_{i=1}^{n_s} \left(r_{i}^{(\ell)} p_{i}^{(\ell)}(x)\right)^{\omega_i}dx} 
\label{eq:fused_r_1}
\\
\tilde{p}^{(\ell)}(x) = & \frac{
	\prod_{i=1}^{n_s} \left(p_{i}^{(\ell)}(x)\right)^{\omega_i}
}
{\int \prod_{i=1}^{n_s} \left(p_{i}^{(\ell)}(x)\right)^{\omega_i} dx}.
\label{eq:fused_p_1}
\end{flalign}
where $\omega_i$ is a weight indicating the strength of the emphasis on sensor $s_i$ in the fusion process. These weights should be normalized, i.e. $\sum_{i=1}^{n_s} \omega_i = 1$.

In an SMC implementation, let us denote the predicted LMB parameters by $\{(r_+^{(\ell)},\{w_{+j}^{(\ell)},x_{+j}^{(\ell)}\}_{j = 1}^{J_+^{(\ell)}})\}_{\ell\in\mathbb{L}_+}$ and the local LMB posterior at node $i$ by $\{(r_i^{(\ell)},\{w_{ij}^{(\ell)},x_{ij}^{(\ell)}\}_{j = 1}^{J_i^{(\ell)}})\}_{\ell\in\mathbb{L}_+}$. During the update step of each local LMB filter, the particles do not change, and only their weights evolve. Thus, for each label $\ell$ we have,
$$ J_i^{(\ell)} = J_+^{(\ell)};\ \ \ \ x_{i,j}^{(\ell)} = x_{+j}^{\ell}.
$$
In other words, the predicted LMB prior and all the updated LMB posteriors will have the same particles but with different weights and existence probabilities. This makes the fusion of the posteriors generated at each sensor straightforward. 

Substituting each density with its particle approximation turns the integrals into weighted sums over the particles, and the fused existence probability is  given by:
\begin{equation}
\tilde{r}^{(\ell)} = \frac{
	\sum_{j=1}^{J_+^{(\ell)}} \prod_{i=1}^{n_s} \left(r_{i}^{(\ell)} w_{i}^{(\ell)}\right)^{\omega_i}}
{
	\prod_{i=1}^{n_s}\left(1-r_{i}^{(\ell)}\right)^{\omega_i}
	+\sum_{j=1}^{J_+^{(\ell)}} \prod_{i=1}^{n_s} \left(r_{i}^{(\ell)} w_{i}^{(\ell)}\right)^{\omega_i}}.
\label{eq:fused_r_2}
\end{equation}
The fused densities also take the form of weighted sums of Dirac deltas:
\begin{equation}
\tilde{p}^{(\ell)}(x) = \sum_{j=1}^{J_+^{(\ell)}} \tilde{w}_{j}^{(\ell)}\ \delta(x-x_{+j}^{(\ell)})
\label{eq:fused_p_2}
\end{equation}
where 
\begin{equation}
\tilde{w}_{j}^{(\ell)} = 
\frac{
	\prod_{i=1}^{n_s} \left(w_{i,j}^{(\ell)}\right)^{\omega_i}
}
{ 
	\sum_{j=1}^{J_+^{(\ell)}}
	\prod_{i=1}^{n_s} \left(w_{i,j}^{(\ell)}\right)^{\omega_i}
}
\label{eq:weight_GCI_fusion}
\end{equation}
is the fused weight of each fused posterior particle.

\begin{Rem}
By minimizing the average distance from \textit{all} local posteriors, the GCI fusion method is considered a \textit{consensus-based} fusion method. The fused posterior should be as similar as possible to \textit{all} local posteriors, hence, would convey the information that are overlapped (regions of consensus) between all sensors. This is correct regardless of what the constant weights $\{\omega_i\}_{i=1}^{n_s}$ are as long as they are non-zero.
\label{rem:GCI_is_consensus}
\end{Rem}

\begin{Rem}
From equations~\eqref{eq:fused_r_2} and~\eqref{eq:weight_GCI_fusion}, it is evident that during the GCI fusion, parameters of each Bernoulli component with label $\ell$ are only fused with those with the same label. Indeed, the fusion could be implemented separately for each label, and there is no cross-label computing.
\label{rem:label_separation}
\end{Rem}

Based on the observation made in Remark~\ref{rem:GCI_is_consensus}, GCI fusion in its current form is consensus fusion. As it was explained earlier~(see Figure~\ref{fig:com_case} and the associated explanation in section~\ref{sec:intro}), consensus fusion cannot be successfully applied to combine the measurements acquired from sensors with different fields of view. In the following, we present a novel solution that resolves this issue.

Inspired by the observation that in GCI fusion, for each label, the multi-sensor data are fused separately (as stated in Remark~\ref{rem:label_separation}), we suggest that each weight $\omega_i$ associated with each sensor $s_i$ is adaptively tuned separately for each label $\ell$. For this purpose, we propose that for each label $\ell$, the weight associated with sensor $s_i$, denoted by $\omega_i^{(\ell)}$, should be continuously tuned to exponentially increase with the information content of that sensor relevant to the possible object with label $\ell$. To quantify the \textit{information content} of a sensor measurement, we look at the Cauchy Schwarz divergence between its local posterior and the prior, with only label $\ell$ considered.

Suppose the predicted LMB distribution is $\bm{\pi}_+=\{(r_+^{(\ell)},p_+^{(\ell)}(\cdot))\}_{\ell\in\mathbb{L}}$, and at Sensor $S_i$ the measurement set $Z_i$ is acquired and used in a local LMB update leading to the local posterior $\bm{\pi}_i=\{(r_i^{(\ell)},p_i^{(\ell)}(\cdot))\}_{\ell\in\mathbb{L}}$. Associate with a particular label $\ell$, there is a prior Bernoulli RFS with its density parametrized as ${\pi}_+^{(\ell)} = \{(r_+^{(\ell)},p_+^{(\ell)}(\cdot))\}$ and a local posterior Bernoulli RFS with its density parametrized as ${\pi}_i^{(\ell)} = \{(r_i^{(\ell)},p_i^{(\ell)}(\cdot))\}$. The fusion weight $\omega_i^{(\ell)}$ is then given by:
\begin{equation}
{\omega_{i}^{(\ell)}}\propto\exp\frac{D_i^{(\ell)}}{\min_{i'\in [1,n_s]} D_{i'}^{(\ell)}}
\label{eq:CS_weight}
\end{equation}
where $D_i^{(\ell)}$ is the Cauchy-Schwarz divergence between $\pi_+^{(\ell)}$ and $\pi_i^{(\ell)}$. The following proposition presents how the divergence between two Bernoulli RFSs can be computed.

\begin{proposition}
A first moment approximation of the Cauchy-Schwarz divergence between two Bernoulli random set densities $\pi_1$ and $\pi_2$, parametrized by $\{(r_1,p_1(\cdot))\}$ and $\{(r_2,p_2(\cdot))\}$ respectively, is given by:
\begin{equation}
D_{\mathrm{CS}}(\pi_1,\pi_2) \approxeq \frac{1}{2} K \left\| r_1\, p_1(\cdot)-r_2\, p_2(\cdot)\right\|^2
\label{eq:BCSD}
\end{equation}
where $\|v_1(\cdot)-v_2(\cdot)\|^2 = \int |v_1(x)-v_2(x)|^2 dx,$ and $K$ is the unit of hyper-volume.
\label{prop:cs}
\end{proposition}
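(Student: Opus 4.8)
The plan is to obtain the claimed expression as a \emph{first-moment} (Poisson) approximation: I would replace each Bernoulli random finite set by the Poisson random finite set that shares its intensity (first moment) function, and then invoke the closed-form Cauchy--Schwarz divergence for Poisson densities in~\eqref{eq:CSD_PHD}. Since a Poisson RFS is completely determined by its intensity, this substitution is the natural way to reduce a divergence between two set densities to a tractable $L^2$ distance between their intensities, and it accounts for both the factor $\tfrac{1}{2}$ and the unit hyper-volume $K$ appearing in~\eqref{eq:BCSD}, which are inherited directly from~\eqref{eq:CSD_PHD}.

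The only genuine computation is the intensity of a Bernoulli RFS. First I would recall that a Bernoulli RFS with parameters $(r,p(\cdot))$ has density equal to $1-r$ on the empty set, $r\,p(x)$ on any singleton $\{x\}$, and $0$ on every set of cardinality two or more. Evaluating the first-moment functional $v(x)=\int \pi(\{x\}\cup X)\,\delta X$ through the set integral, every term with $|X|\geq 1$ requires the density on a set of cardinality at least two and therefore vanishes, leaving only the singleton contribution. Hence $v(x)=r\,p(x)$, so the first moments of $\pi_1$ and $\pi_2$ are $v_1(x)=r_1\,p_1(x)$ and $v_2(x)=r_2\,p_2(x)$.

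With the two intensities in hand, I would substitute $v_1=r_1\,p_1$ and $v_2=r_2\,p_2$ into~\eqref{eq:CSD_PHD}, yielding $D_{\mathrm{CS}}(\pi_1,\pi_2)\approxeq \tfrac{1}{2} K\,\|r_1\,p_1(\cdot)-r_2\,p_2(\cdot)\|^2$, which is precisely~\eqref{eq:BCSD}.

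I expect the main obstacle to be conceptual rather than computational, namely justifying the Poisson surrogate: a Bernoulli RFS contains at most one element, whereas its Poisson approximation admits arbitrarily large cardinalities, so the replacement discards all higher-order cardinality structure and matches the two densities only at the level of their first moments. Indeed, a direct set-integral evaluation of the exact inner products $\langle\pi_1,\pi_2\rangle$, $\|\pi_1\|$ and $\|\pi_2\|$ for the two Bernoulli densities would retain the $1-r$ terms and produce a considerably less compact expression, which is why the clean form~\eqref{eq:BCSD} is stated as an approximation (as the $\approxeq$ symbol signals). I would therefore argue that the first-moment approximation is adequate for the intended use: here the divergence is required only as a scalar indicator of the relative information content of each sensor for a given label $\ell$ in~\eqref{eq:CS_weight}, where only ratios of divergences enter, so first-moment fidelity suffices and the neglected higher-cardinality terms do not materially affect the resulting adaptive weights.
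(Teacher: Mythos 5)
Your proposal is correct and follows essentially the same route as the paper's proof: replace each Bernoulli RFS with the Poisson RFS matching its first moment $v(x)=r\,p(x)$, then substitute these intensities into the closed-form Poisson Cauchy--Schwarz divergence~\eqref{eq:CSD_PHD}. The only cosmetic differences are that you derive the Bernoulli intensity by direct set-integral computation where the paper simply cites Mahler, and the paper additionally justifies the Poisson surrogate by invoking Mahler's result that the intensity-matched Poisson density is the Kullback--Leibler-closest Poisson approximation, a point your discussion of the approximation's adequacy addresses only informally.
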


\begin{proof}
Mahler~\cite{book_Mahler_2014} has shown that in terms of Kullback-Leibler distance, the closest Poisson density $\pi_{\mathrm{Poiss.}}$ to any RFS density $\pi$ is the one that matches the first moment (intensity function) of $\pi$. On the other hand, the intensity function of a Bernoulli RFS with parameters $\{(r,p(\cdot))\}$ is given by:~\cite{book_Mahler_2014}
\begin{equation}
v(x) = r\, p(x).
\end{equation}
Hence, the best first moment approximation of the Bernoulli densities $\pi_1$ and $\pi_2$ are Poisson densities with intensity functions
\begin{equation}
v_1(x) = r_1\, p_1(x);\ \ \ \ \ v_2(x) = r_2\, p_2(x).
\end{equation}
Approximating the two Bernoulli RFSs with their first moment approximations (Poisson RFSs), the Cauchy-Schwarz divergence between them is then given by~\eqref{eq:CSD_PHD}. Substituting the above intensity functions in~\eqref{eq:CSD_PHD} leads to equation~\eqref{eq:BCSD}.
\end{proof}

When the LMB filter is implemented using particle approximations for Bernoulli densities, the inner product integration turns into a weighted sum over particles. Thus, in the context of weight tuning for multi-sensor fusion, the Cauchy-Schwarz divergence between the Bernoulli prior ${\pi}_+^{(\ell)} = \{(r_+^{(\ell)},\{w_{+,j}^{(\ell)},x_{+,j}^{(\ell)}\}_{j=1}^{J_+^{(\ell)}})\}$ and a local Bernoulli posterior ${\pi}_i^{(\ell)} = \{(r_i^{(\ell)},\{w_{i,j}^{(\ell)},x_{+,j}^{(\ell)}\}_{j=1}^{J_+^{(\ell)}})\}$
is given by:\footnote{Note that the total number of particles $J_+^{(\ell)}$ and their locations $x_{+,j}^{(\ell)}$ are the same in the prior and the local posterior, only particle weights and existence probabilities are updated.}
\begin{equation}
D_i^{(\ell)} = \frac{1}{2}K\ \sum_{j=1}^{J_+^{(\ell)}}
\left|
r_i^{(\ell)} w_{i,j}^{(\ell)} - r_+^{(\ell)} w_{+,j}^{(\ell)} 
\right|^2.
\label{eq:CS}
\end{equation}
 
In a step-by-step algorithm, for each possibly existing object with label $\ell$, we first compute the Cauchy-Schwarz divergences $D_i^{(\ell)}$ for all sensor nodes $i \in [1,n_s]$. We then find the minimum divergence $D_{\min}^{(\ell)} \triangleq \min_{i\in[1,n_s]} D_i^{(\ell)}$ and compute the non-normalized weights as 
$$
\Omega_i^{(\ell)} = \exp
\left({D_i^{(\ell)}}\bigg/{D_{\min}^{(\ell)}}\right)
$$ 
and finally normalize the weights as $$\omega_i^{(\ell)} =\frac{\Omega_i^{(\ell)}}{\sum_{i'=1}^{n_s} \Omega_i^{(\ell)}}.$$
With these label-dependent weights that continuously evolve with time (depending on the worth of sensor data relevant to each possible object label), the local posteriors are then fused to form a new LMB with the following probabilities of existence:
\begin{equation}
\tilde{r}^{(\ell)} = \frac{
	\sum_{j=1}^{J_+^{(\ell)}} \prod_{i=1}^{n_s} \left(r_{i}^{(\ell)} w_{i}^{(\ell)}\right)^{\omega_i^{(\ell)}}}
{
	\prod_{i=1}^{n_s}\left(1-r_{i}^{(\ell)}\right)^{\omega_i^{(\ell)}}
	+\sum_{j=1}^{J_+^{(\ell)}} \prod_{i=1}^{n_s} \left(r_{i}^{(\ell)} w_{i}^{(\ell)}\right)^{\omega_i^{(\ell)}}}
\label{eq:fused_r_3}
\end{equation}
and the following weights for the particles in the fused Bernoulli component densities:
\begin{equation}
\tilde{w}_{j}^{(\ell)} = 
\frac{
	\prod_{i=1}^{n_s} \left(w_{i,j}^{(\ell)}\right)^{\omega_i^{(\ell)}}
}
{ 
	\sum_{j=1}^{J_+^{(\ell)}}
	\prod_{i=1}^{n_s} \left(w_{i,j}^{(\ell)}\right)^{\omega_i^{(\ell)}}
}.
\label{eq:weight_GCI_fusion_3}
\end{equation}

\begin{Rem}
To maintain tractability, LMB components with extremely small existence probabilities should be pruned. This is performed only after GCI-fusion of the local posteriors.
\end{Rem}
\subsection{Step-by-step Algorithm}
Algorithm~\ref{alg:1} shows the complete pseudo code for our proposed approach to multiple-view multi-sensor fusion with the LMB filter. Starting with an LMB prior (which is the fused LMB posterior from previous time), the function $\textsc{Predict}(\cdot)$ implements the LMB prediction step and the function $\textsc{Update}(\cdot)$ implements each sensor's local LMB update step.  $\textsc{Fusion}(\cdot)$ denotes the multi-sensor fusion process which is separately detailed in algorithm~\ref{alg:2}. There, the parameters of locally updated posteriors for each label are separately fused using GCI-fusion formulae \eqref{eq:fused_r_3}-\eqref{eq:weight_GCI_fusion_3}, in which the weights of sensors have been adaptively tuned according to the information contents of each sensor measurement relevant to the object label.

In Algorithm~\ref{alg:1}, after the fused LMB posterior $\tilde{\bm{\pi}}$ is acquired, Bernoulli components with small existence probabilities are pruned and finally the number of objects and their states are estimated. 

\begin{algorithm}[!htb]
	\caption{LMB filtering with multiple-view multi-sensor fusion}
	\label{alg:1} 
	\begin{algorithmic}[1] 
	\Procedure{LMB\_Filter}{LMB prior $\bm{\pi}$}
		\State $\bm{\pi}_{+} \gets$ \Call{Predict}{$\bm{\pi}$} \Comment Using Eq. \eqref{eq:prediction}-\eqref{eq:prediction2}.
		\For{$i=1:n_s$}
		\State $\bm{\pi}_{i} \gets$ \Call{Update}{$\bm{\pi}_{+},Z_i$} \Comment Using Eq. \eqref{eq:update}-\eqref{eq:update2}.
		\EndFor
		\State $\tilde{\bm{\pi}} \gets$ \Call{Fusion}{$\{\bm{\pi}_{i}\}_{i=1}^{n_s},\bm{\pi}_+$}
		\State \Call{Pruning}{}	 \Comment Ref. \cite{vo_glmb_TSP_2014,vo_lmb_TSP_2014}
		\State \Call{Estimatation}{} \Comment Ref. \cite{vo_glmb_TSP_2014,vo_lmb_TSP_2014}
	\EndProcedure
	\end{algorithmic}
\end{algorithm}

\begin{algorithm}[!htb]
	\caption{Multi-sensor fusion over separate object labels.}
	\label{alg:2} 
	\begin{algorithmic}[1] 
	\Procedure{Fusion}{$\{\bm{\pi}_{i}\}_{i=1}^{n_s},\bm{\pi}_+$}
	
	\For{$\ell\in \mathbb{L}_+$}
		\For{$i=1:n_s$}
		\State $D^{(\ell)}_i \gets $\Call{CSD}{$\bm{\pi}_{i},\bm{\pi}_+$}  \Comment Using Eq. \eqref{eq:CS}
		\EndFor
		\State $D^{(\ell)}_{\min} \gets \min_{i\in[1,n_s]} D^{(\ell)}_i$
		\For{$i=1:n_s$}
		\State $\Omega^{(\ell)}_i\gets \exp
		\left({D^{(\ell)}_i}\big/{D^{(\ell)}_{\min}}\right)
		$ 
		\EndFor
		\For{$i=1:n_s$}
		\State $\omega^{(\ell)}_i\gets {\Omega^{(\ell)}_i}\big/{\sum_{i'=1}^{n_s}\Omega^{(\ell)}_{i'}}
		$ \Comment Normalizing
		\EndFor		
		\State $\tilde{\bm{\pi}}^{(\ell)} \gets$\Call{GCI-Fusion}{$\{\bm{\pi}_{i}^{(\ell)},\omega^{(\ell)}_i \}_{i=1}^{n_s}$}
		\Statex \Comment Using Eq. \eqref{eq:fused_r_3}-\eqref{eq:weight_GCI_fusion_3}
		\EndFor
	\EndProcedure
	\end{algorithmic}
\end{algorithm}

\section{Numerical Experiments}
\label{sec:sim_res}
We conducted extensive experiments involving various scenarios with different numbers of targets, sensors, target motion models, sensor detection profile models and sensors field of view. In each experiment, we compared the performance of the proposed sensor fusion method with the state of art method, in terms of Optimal Sub-Pattern Assignment (OSPA) errors~\cite{vo_OSPA_TSP_2008}. This section includes the representative set of our simulation results which show how the proposed method works in different clutter rates and sensors' fields of view. All scenarios share the following parameters.

The targets maneuver in an area of $1600~\textrm{m}~\times~1200~\textrm{m}$. The single target state $\bm{x}$ is comprised of its label and unlabeled state. The label is formed as $\ell = (k_B,i_B)$ where $k_B$ is the birth time of the target and $i_B$ is an index to distinguish targets born at the same time. The unlabeled state is four-dimensional and includes the Cartesian coordinates of the target and its speed in those directions, denoted by $x = [p_x\ \dot{p}_x\ p_y\ \dot{p}_y]^\top$. Each target moves according to the Nearly-Constant Velocity (NCV) model. With this model, the transition density is $f_{k|k-1}(x_k|x_{k-1},\ell)=\mathcal{N}(x_k;F_kx_{k-1},Q_k)$, where the transition matrix is
\begin{equation*}
F_k=
\begin{bmatrix}
1 & T & 0 & 0\\
0 & 1 & 0 & 0\\
0 & 0 & 1 & T\\
0 & 0 & 0 & 1
\end{bmatrix},
\end{equation*}
and the covariance matrix is
\begin{equation*}
Q_k=\sigma_w^2
\begin{bmatrix}
B & \bm{0}_2 \\
\bm{0}_2 & B
\end{bmatrix}\ \ ;\ \ B = 
\begin{bmatrix}
0.1 & 0.001 \\ 0.1 & 0.001
\end{bmatrix}.
\end{equation*}
The variance parameter $\sigma_w=5$~m$/$s$^2$
and $T$ is the sampling interval~(in our experiments $T=1\,$s). The probability of survival is fixed at $p_S(x,\ell) = 0.99$ and the LMB component's pruning threshold is $\gamma_p=10^{-4}$. For each sensor $s_i$, with its location denoted by $[x_{s_i}\ y_{s_i}]^\top$, each target (if detected) leads to a bearing and range measurement vector with the measurement noise density given by $ \mathcal{N}(\cdot; [0\ 0]^\top, R)$ in which $R=$~diag$(\sigma_\theta^2,\sigma_r^2)$ with $\sigma_\theta^2=2\pi/180$~rad and $\sigma_r^2=10$~m$^2$ being the scales of range and bearing noise. Thus, the single target likelihood function is ~$g(z_i|x,\ell)=\mathcal{N}(z;\mu_i(x),R)$, where 
\begin{equation}
\mu_i(x) = \left[\arctan\left({\frac{p_x\!-\!x_{s_i}}{p_y\!-\!y_{s_i}}}\right)\ \sqrt{({p_x\!-\!x_{s_i}})^2+({p_y\!-\!y_{s_i}})^2}\right]^\top.
\end{equation}

 In all scenarios, the density $p^{(\ell)}(\cdot)$ of each labeled Bernoulli component in the filter is approximated by $J^{(\ell)} = 1000$ particles. All simulation experiments were implemented in MATLAB R2015b and ran on an Intel® Core™ i7-4770 CPU @3.40GHz, and 8~GB memory. 

\begin{figure}
\centering
\includegraphics[width=3 in]{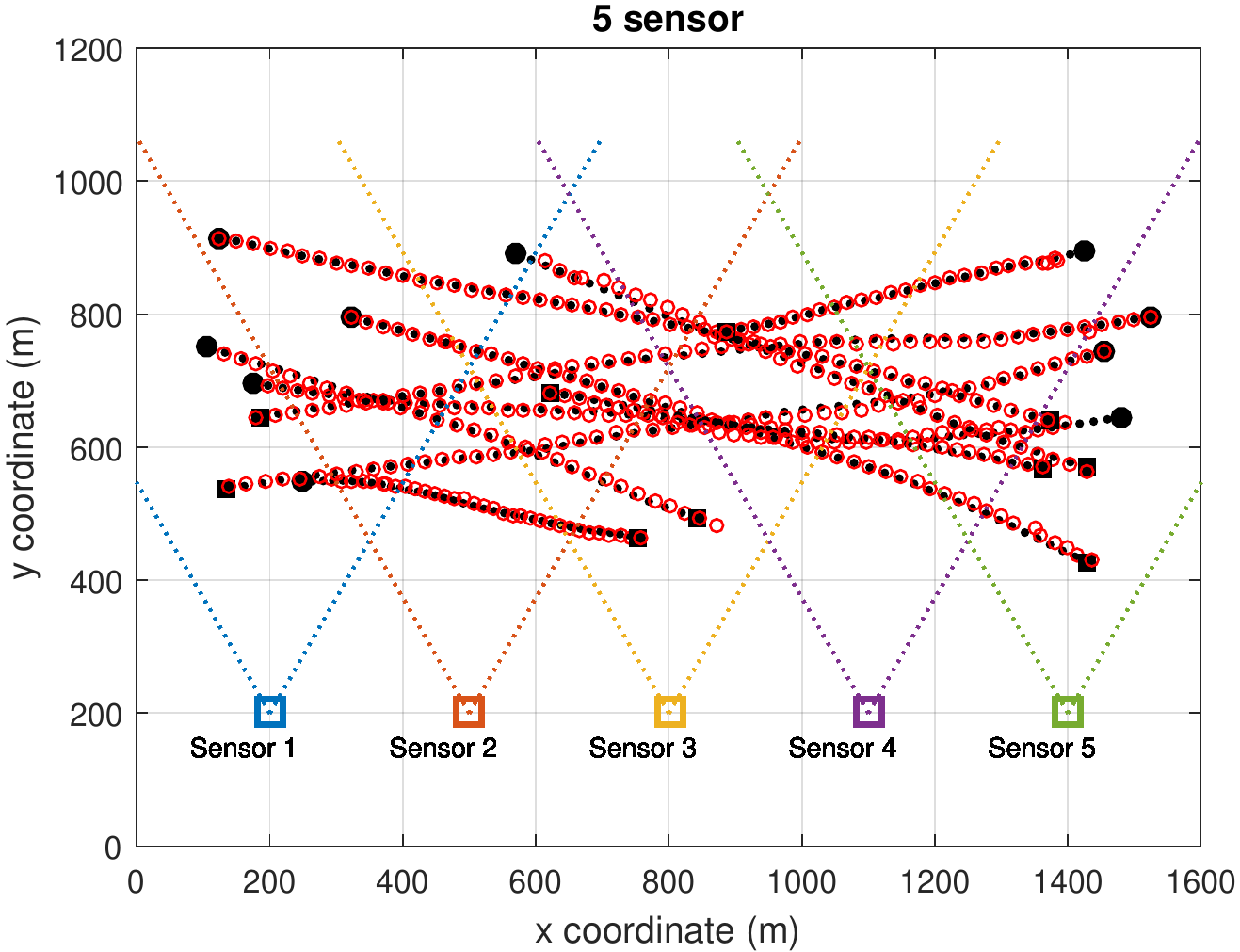}\\
\caption{\label{fig:scenario_10target} A challenging simulated scenario involving 10 targets appearing and disappearing in different locations and times. Multi-target tracking needs fusion of the multiple sensors that have different fields of view. Black dots are ground truths targets trajectories with starting points denoted by \Large$\bullet~$\footnotesize and end points by $\blacksquare$. The red circles are the location estimates at each iteration. The figure is best viewed in color.}
\end{figure}

 Figure~\ref{fig:scenario_10target} shows a challenging scenario involving 10 targets maneuvering while being detected by five sensors that limited fields of view (with overlaps). The targets enter and exit the scene mainly from the sides of the scene, with different birth/death times and locations. The birth process is simply chosen to have a single LMB component, hypothesizing that with a constant existence probability of 0.05, one object may appear at each time step with its location uniformly distributed within the surveillance area. The five sensors fields of view and a sample of 10 target trajectories are also shown in Fig.~\ref{fig:scenario_10target}. Each sensor has a $60^\circ$ wide field of view, except in studies investigating the effect of the width of sensor's field of view~(see section~\ref{FoV_analysis}).
\subsection{Probability of detection profile}
To model the detection profile of each sensor in our numerical studies, we adopt a model similar to the one presented in~\cite{pDcomputation} for sonar detectors used for submarine location estimation. The detection profile of each sensor is not only range-dependent~(decreases with increasing sensor-target distance), but also angle-dependent and has no detection ability when the targets are outside the sensor's field of view.

Assuming a Rayleigh fading signal model, with a threshold $\Gamma_{th}$, the probability of detection is
\begin{equation}
p_{D,i}=\exp\Big(-\frac{\Gamma_{th}}{1+c_0f_1(r_i)f_2(\psi_i)}\Big),
\end{equation}
where  $c_0$ is a constant, $f_1(r_i)$ describes the dependence on the propagation distance between sensor and object, which has inversely proportional relationship $f_1(r_i)=r_i^{-1}$ in which $r_i(x) = \sqrt{(p_x-x_{s_i})^2+(p_y-y_{s_i})^2}$, and $f_2(\psi_i)$ specifies the dependence on the propagation angle. Bearing dependence function is  modeled similar to a Butterworth filter,
\begin{equation}
f_2(\psi_i)=\frac{1}{1+(\frac{\psi_i}{W})^{2H}},
\end{equation}
where 2W is the 3dB bandwidth corresponding to the sensor's field of view, and $H$ is the filter order, and $\psi_i$ is the $i$-th sensor-target vertical bearing given by
\begin{equation}
\psi_i(x) =\arctan\Big(\Big|\frac{p_x-x_{s_i}}{p_y-y_{s_i}}\Big|\Big).
\end{equation}
When a target is outside the FoV of the $i$-th sensor, the false alarm rate is 
\begin{equation}
p_{fa,i}=\exp(-\Gamma_{th}),
\end{equation}
which can be controlled by adjusting the threshold $\Gamma_{th}$.

\begin{figure}
\centering
\includegraphics[width=3 in]{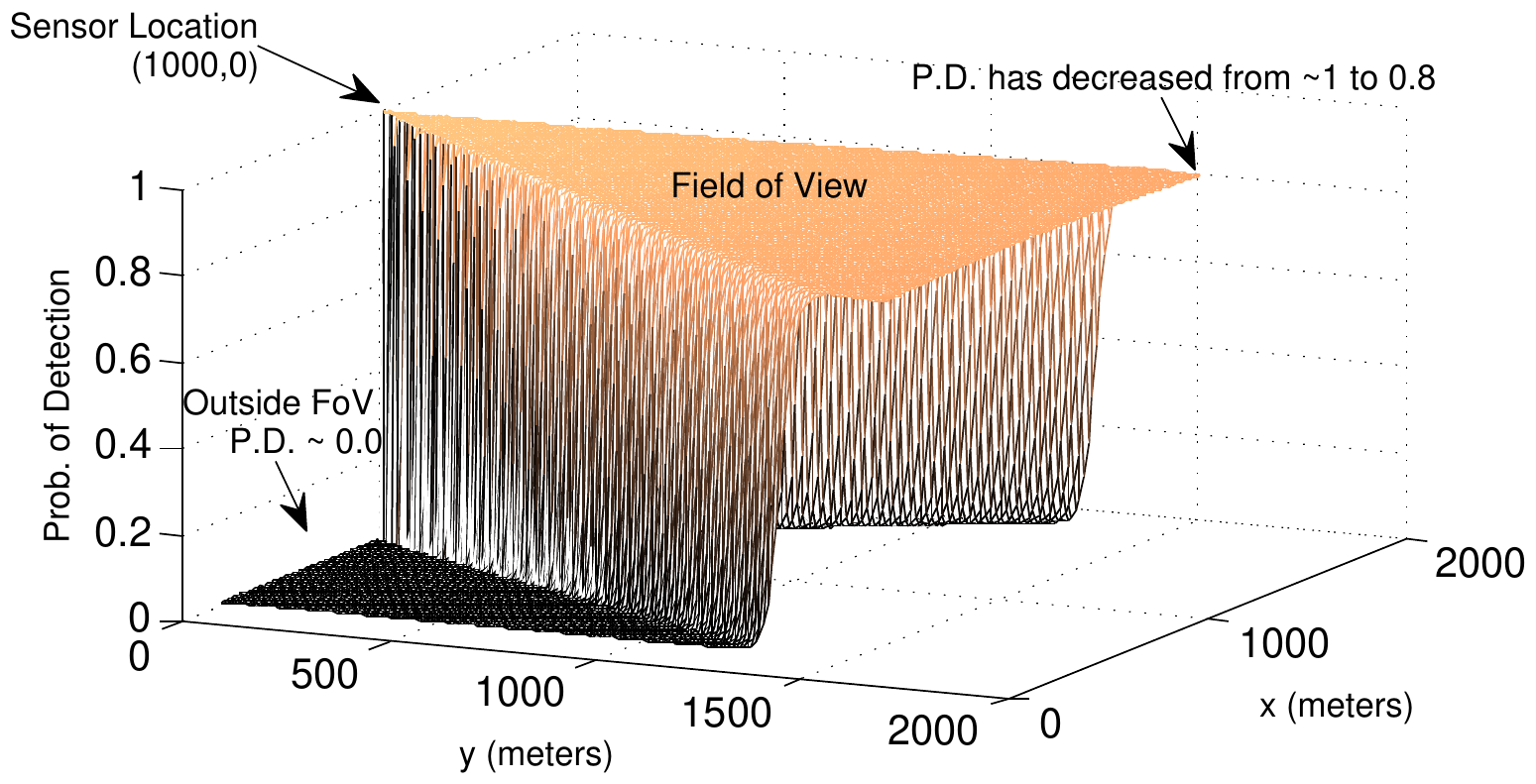}\\
\caption{\label{fig:pD} Sensor's detection probability distribution for $W=30^\circ$. }
\end{figure}

Choices of the above parameters depend on the specific application environment. In our case, we select a higher order value $H=40$ to ensure that sensors have a good detection ability especially for the boundary of each sensor's field of view. We also set $\Gamma_{th}=4.6$ and $c_0=4.6\times10^4$ to ensure each sensor's $P_{D,i}$ is between 0.99 and 0.01. Figure~\ref{fig:pD} shows one sensor's detection probability distribution for $W=30$. Applying this detection profile to sensors in the scenario shown in Fig.~\ref{fig:scenario_10target} leads to different probabilities of detection by different sensors for the same target. For instance, for a target located at (400,1000), we have $p_{D,s_1}=0.9221$, $p_{D,s_2}=0.9238$, $p_{D,s_3}=0.9156$, $p_{D,s_4}=0.01$ and $p_{D,s_5}=0.01$.

\subsection{Different clutter analysis}
Each measurement set acquired from each sensor also includes Poisson distributed clutter with varying clutter rates. In our first experiment presented here, we examined the tracking performance of our multi-sensor fusion method with different clutter rates including $\lambda_c = 5$, $\lambda_c = 15$ and $\lambda_c = 30$. We ran 200 Monte Carlo repetitions for the same object trajectories but with independently generated different clutter and measurement noise realizations. We then computed the average OSPA errors with order and cut-off parameters $p=2$ and $c=100$ (see~\cite[Eqs.~(3)-(4)]{vo_OSPA_TSP_2008} for definition of OSPA and its parameters).

Figures~\ref{fig:different_lambda} and~\ref{fig:card_different_lambda} show the OSPA errors and cardinality statistics for each case. The OSPA peaks at $k=10$, $k=15$, $k=30$ and $k=35$ are due to the objects' appearance and disappearance at those times. The results show that our proposed method exhibit desirable tracking performance that does not substantially degrade in severely high clutter rate environments. They also demonstrate that variances of cardinality statistics are small which is evidence for desirable reliability of detection capability of the proposed tracking solution.

Note that the state of art (Consensus fusion for labeled random set distribution using GCI~\cite{consensus_labelled_RFS_2016}) failed to track all the existing targets, because at any time there were targets that were not detected by one or more sensors, hence there was no consensus between all sensors about them and they were omitted from the fused results. 

\begin{figure}
\centering
\includegraphics[width=3 in]{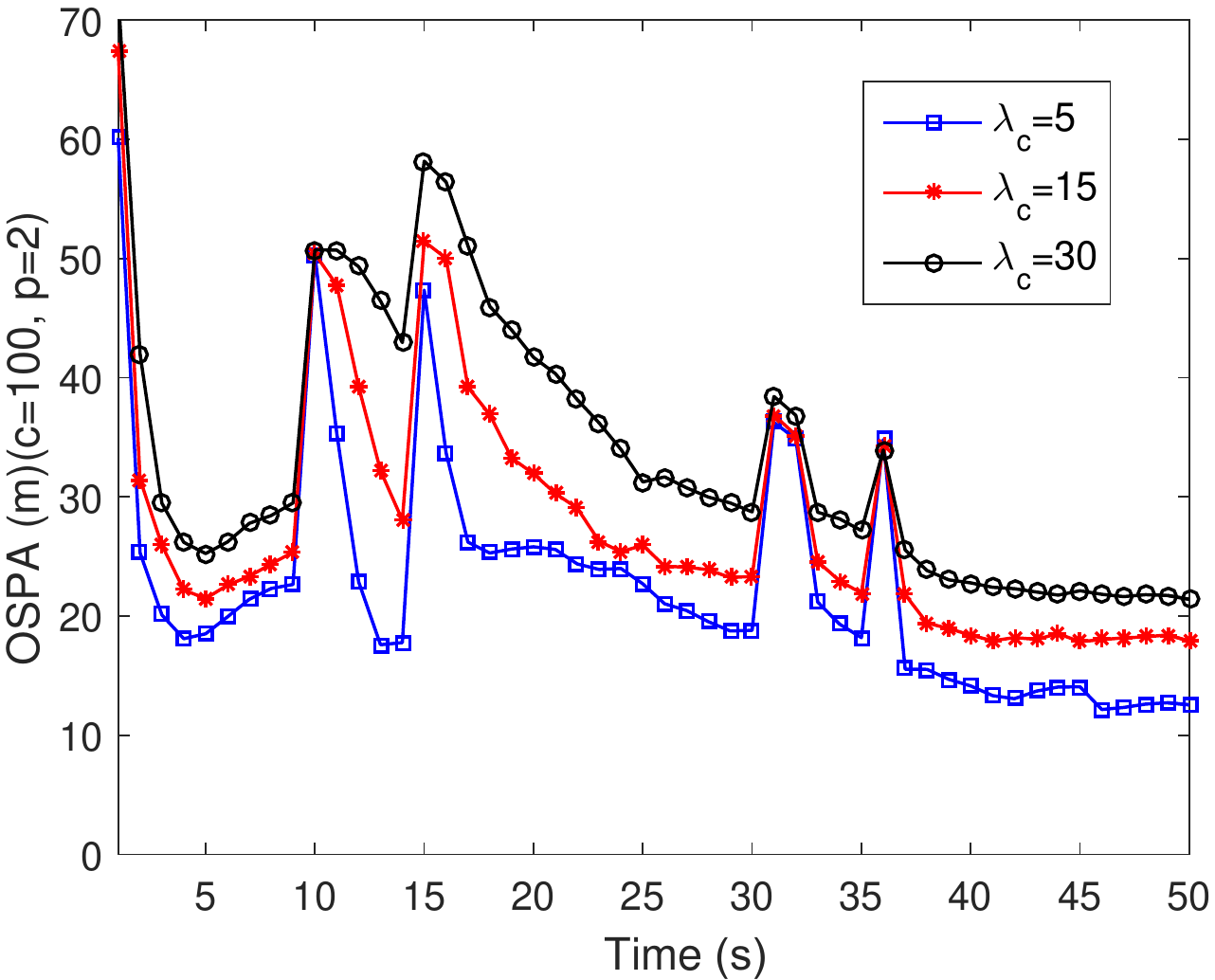}
	\caption{\label{fig:different_lambda} OSPA distances under low clutter rate $\lambda_c=5$, middle clutter rate $\lambda_c=15$ and high clutter rate $\lambda_c=30$ (averaged over 200 MC runs.) The figure is best viewed in color.}
\end{figure}
\begin{figure}
\centering
\includegraphics[width=3 in]{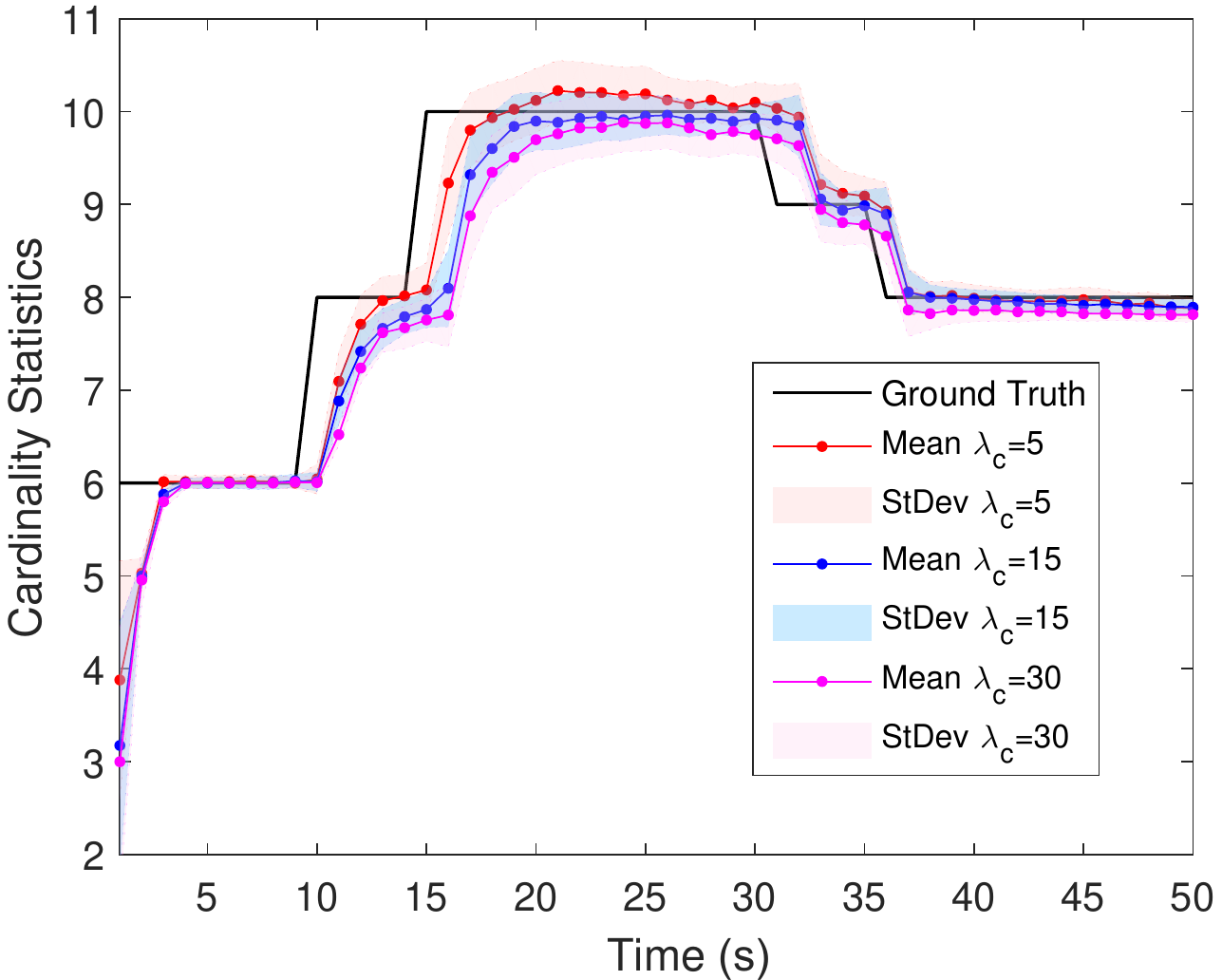}\\
\caption{\label{fig:card_different_lambda}Cardinality statistics under low clutter rate $\lambda_c=5$, middle clutter rate $\lambda_c=15$ and high clutter rate $\lambda_c=30$ (averaged over 200 MC runs.) The figure is best viewed in color.}
\end{figure}

\subsection{Comparative Analysis}
\label{FoV_analysis}
In order to compare the performance of our method with the state of art GCI fusion method as introduced in~\cite{consensus_labelled_RFS_2016} and used in~\cite{meng_CS_2016,bailu_2016}, we examined the OSPA errors and cardinality statistics returned by our proposed sensor fusion method with GCI fusion. To implement GCI fusion (with fixed and equal weights, as proposed in \cite{meng_CS_2016,bailu_2016,consensus_labelled_RFS_2016}), we simply used the bandwidth $W = 90^\circ$ and disabled the weight adaptation part of our code (replacing it with all weights equal to $1/n_s$). A bandwidth of $W = 90^\circ$ means unlimited sensors' field of view. Hence, all sensors should have consensus on detecting an existing target, and the GCI fusion method (which is consensus-based) should work. We also recorded the results returned by GCI fusion method in cases where the sensors' field of view is limited (to demonstrate that the GCI fusion method fails while ours performs satisfactorily). 

Figures~\ref{fig:different_width} and~\ref{fig:card_different_width} present the OSPA errors and cardinality statistics of each method in each scenario, averaged over 200 Monte Carlo experiments in each case. The following observations can be made from the results presented in Figures~\ref{fig:different_width} and~\ref{fig:card_different_width}:
\begin{itemize}
\item When the sensors have unlimited field of view ($W = 90^\circ$), our method and the state of art GCI fusion method~\cite{consensus_labelled_RFS_2016} perform similarly. But our method returns slightly lower estimation errors, because within its sensor fusion scheme, it prioritizes sensor measurements that convey more information.
\item When the field of view is limited (e.g. $W = 30^\circ$ in our experiments), the GCI fusion method~\cite{consensus_labelled_RFS_2016} fails to detect some targets (it returns cardinality estimates that are far below the ground truth in Figure~\ref{fig:card_different_width}) and consequently returns large OSPA errors--see Figure~\ref{fig:different_width}. This is while with the same limited field of view, our method returns acceptable cardinality estimates (detecting all targets in most of the 200 Monte Carlo runs) and small OSPA errors.
\item The performance of our proposed method (in terms of detection - cardinality - and overall OSPA errors) is robust to variations in the field of view. There is not much difference in data plotted for our method in figures~\ref{fig:different_width} and~\ref{fig:card_different_width} for different $W$ values.
\end{itemize} 

\begin{figure}
\centering
\includegraphics[width=3 in]{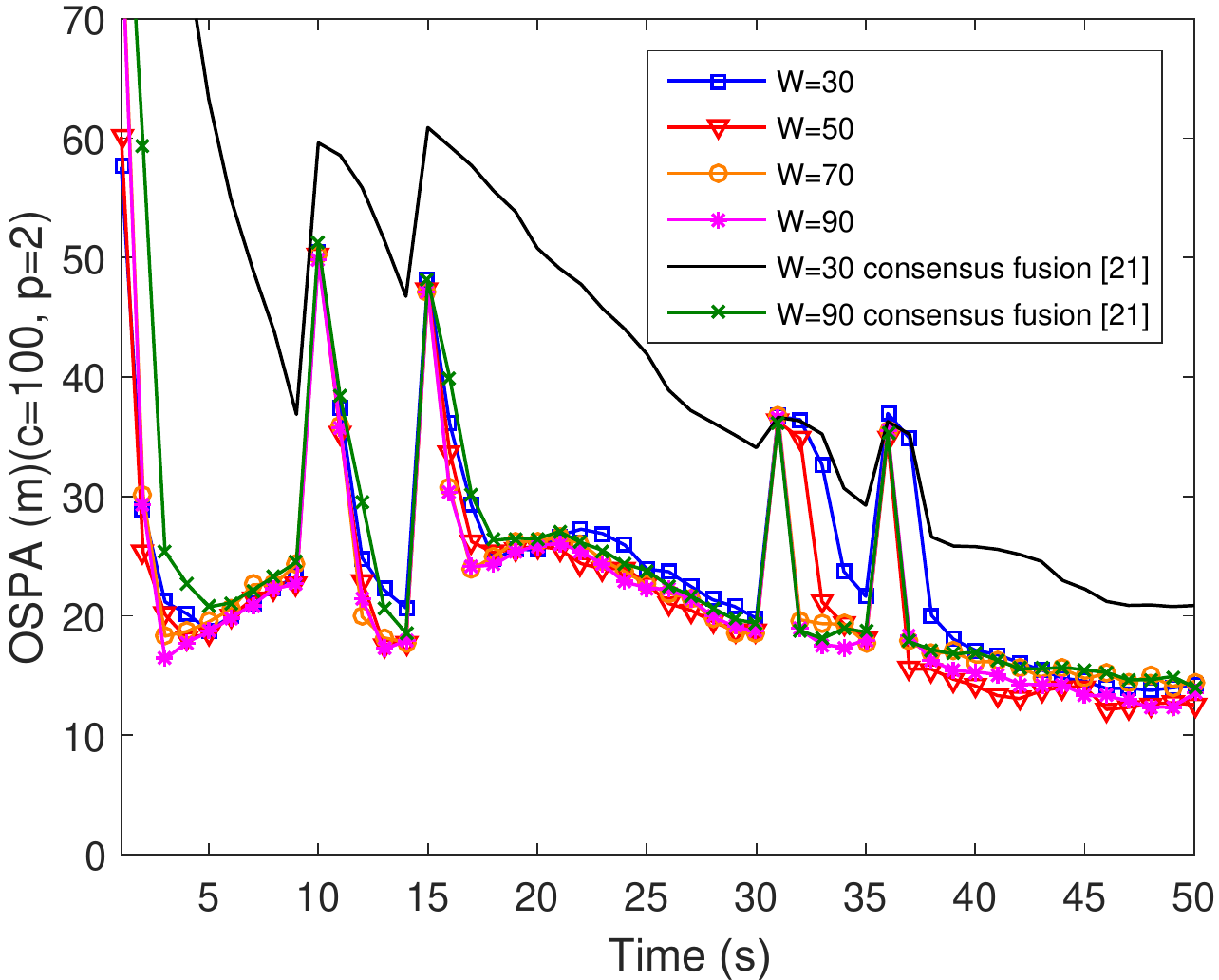}\\
\caption{\label{fig:different_width}OSPA distances with different FoV of sensors while $\lambda_c=5$ (averaged over 200 MC runs.). The figure is best viewed in color.}
\end{figure}

\begin{figure}
\centering
\includegraphics[width=3 in]{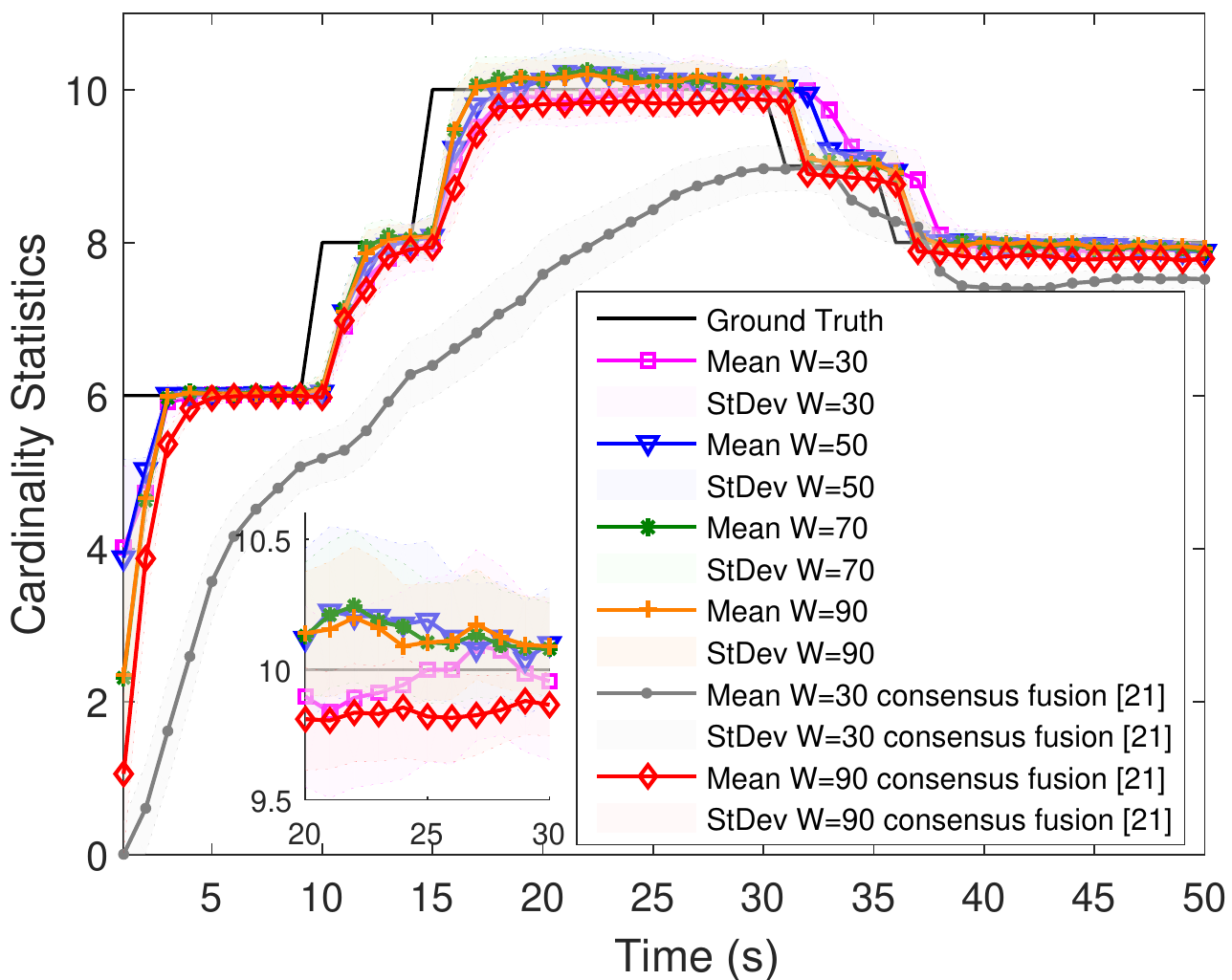}\\
\caption{\label{fig:card_different_width}Cardinality statistics with different FoV while $\lambda_c=5$ (averaged over 200 MC runs.). The figure is best viewed in color.}
\end{figure}

\section{Conclusions}
\label{sec:conc}
A novel multiple-view multi-sensor fusion method was introduced for LMB filters. Through the Cauchy-Schwarz divergence evaluation for each LMB component on all sensors, the proposed method overcomes the drawbacks of commonly used Generalized Covariance Intersection (GCI) method, which only considers a constant weight to each sensor during the whole tracking and fusion process. With a proper sensor importance weights selection scheme, a step-by-step SMC fusion implementation algorithm was detailed with LMB filters running on each sensor node. Numerical experiments involving a challenging multi-target tracking scenario, showed that our method can properly fuse information received from sensors with different fields of view, while the state of the art sensor fusion method fails.

\section*{Acknowledgment}
This project was supported by the Australian Research Council through ARC Discovery grant~DP160104662, as well as National Nature Science Foundation of China grant~61673075.

\section*{References}

\end{document}